\documentclass[11pt]{article}

\usepackage{amsmath,amssymb,amsbsy,amsfonts,amsthm,latexsym,
            amsopn,amstext,amsxtra,euscript,amscd,color}
            
\PassOptionsToPackage{hyphens}{url}\usepackage{hyperref}

\newtheorem{remark}{Remark}

\newtheorem{thm}{Theorem}
\newtheorem{lem}[thm]{Lemma}

\newtheorem{defn}[thm]{Definition}

\newcommand{\Tr}{{\rm Tr}}
\newcommand{\Trn}{{\rm Tr}_n}

\def\+{\oplus}

\def\F{{\mathbb F}}

\def\F{{\mathbb F}}

\def\00{{\bf 0}}
\def\11{{\bf 1}}
\def\+{\oplus}

\def\\{\cr}
\def\({\left(}
\def\){\right)}

\providecommand{\newoperator}[3]{%
  \newcommand*{#1}{\mathop{#2}#3}}

\newoperator{\FD}{\mathrm{FD}}{\nolimits}

\setlength{\marginparwidth}{2.2cm}
\usepackage[colorinlistoftodos,prependcaption,textsize=tiny]{todonotes}

\begin{document}
\title{\bf Investigations on $c$-(almost) perfect nonlinear functions}
\author{Constanza Riera\thanks{C. Riera is with Department of Computer Science,
 Electrical Engineering and Mathematical Sciences,   Western Norway University of Applied Sciences,  5020 Bergen, Norway. 
E-mail: csr@hvl.no},  Pantelimon~St\u anic\u a
\thanks{P. St\u anic\u a is with Applied Mathematics Department, Naval Postgraduate School, Monterey 93943, USA. 
E-mail: pstanica@nps.edu}}

\maketitle

\begin{abstract}
In a prior paper~\cite{EFRST20}, along with P. Ellingsen, P. Felke and A. Tkachenko,  we defined a new (output) multiplicative differential, and the corresponding $c$-differential uniformity, which has the potential of extending differential cryptanalysis. Here, we continue the work, by looking at some APN functions through the mentioned concept and show that their $c$-differential uniformity increases significantly, in some cases.     
\end{abstract}
{\bf Keywords:} 
Boolean, 
$p$-ary functions, 
$c$-differentials, 
Walsh transform, 
differential uniformity,
perfect and almost perfect $c$-nonlinearity
\newline
{\bf MSC 2000}: 06E30, 11T06, 94A60, 94C10.


\section{Introduction and motivation}

In~\cite{BCJW02}, the authors used a new type of differential that is quite useful from a practical perspective for ciphers that utilize modular multiplication as a primitive operation. It is an extension of a type of differential cryptanalysis and it was used to cryptanalyze some existing ciphers (like a variant of the well-known IDEA cipher).
The authors argue that one should look (and some authors did) at other types of differentials for a Boolean (vectorial) function $F$, not only the usual $\left(F(x+a),F(x)\right)$. In~\cite{BCJW02}, the differential used in their attack was $\left(F(cx),F(x) \right)$.
Drawing inspiration from the mentioned  successful attempt, along with P. Ellingsen, P. Felke and A. Tkachenko (see ~\cite{EFRST20}),  we defined a new (output) multiplicative differential, and the corresponding generalized differential uniformity.  
In this paper, we investigate some known APN functions under this new multiplicative differential, and show that their $c$-differential uniformity drops significantly, in some cases.     

The objects of this study are Boolean and $p$-ary functions (where $p$ is an odd prime) and some of their differential properties.  We will introduce here only some needed notation, and the reader can consult~\cite{Bud14,CH1,CH2,CS17,MesnagerBook,Tok15} for more on Boolean and $p$-ary functions.

Let $n$ be a positive integer and $\F_{p^n}$ denote the  finite field with $p^n$ elements, and $\F_{p^n}^*=\F_{p^n}\setminus\{0\}$ is the multiplicative group (for $a\neq 0$, we often write $\frac{1}{a}$ to mean the inverse of $a$ in the multiplicative group). We let $\F_p^n$ be the $n$-dimensional vector space over $\F_p$. We will denote by $\eta(\alpha)$ the quadratic character of $\alpha$ (that is, it is $\eta(\alpha)=0$ if $\alpha=0$, $\eta(\alpha)=1$ if $0\neq \alpha$ is a square, $\eta(\alpha)=-1$ if $0\neq \alpha$ is not a square). $|A|$ will denote the cardinality of a set $A$.
We call a function from $\F_{p^n}$ to $\F_p$  a {\em $p$-ary  function} on $n$ variables. 
$\Trn:\F_{p^n}\to \F_p$ is the absolute trace function, given by $\Trn(x)=\sum_{i=0}^{n-1} x^{p^i}$.
 
Given a $p$-ary  function $f$, the derivative of $f$ with respect to~$a \in \F_{p^n}$ is the $p$-ary  function
$
 D_{a}f(x) =  f(x + a)- f(x), \mbox{ for  all }  x \in \F_{p^n}.
$

For positive integers $n$ and $m$, any map $F:\F_{p^n}\rightarrow\F_{p^m}$ (or, alternatively, $F:\F_p^n\to\F_p^m$, though we will in this paper deal with the former form of the function) is called a {\em vectorial $p$-ary  function}, or {\em $(n,m)$-function}. When $m=n$, $F$ can be uniquely represented as a univariate polynomial over $\F_{p^n}$ (using some identification, via a basis, of the finite field with the vector space) of the form
$
F(x)=\sum_{i=0}^{p^n-1} a_i x^i,\ a_i\in\F_{p^n},
$
whose {\em algebraic degree}   is then the largest Hamming weight of the exponents $i$ with $a_i\neq 0$.

For an $(n,n)$-function $F$, and $a,b\in\F_{p^n}$, we let $$\Delta_F(a,b)=|{\{x\in\F_{p^n} : F(x+a)-F(x)=b\}}|.$$ We call the quantity
$\Delta_F=\max\{\Delta_F(a,b)\,:\, a,b\in \F_{p^n}, a\neq 0 \}$ the {\em differential uniformity} of $F$. If $\Delta_F= \delta$, then we say that $F$ is differentially $\delta$-uniform. If $\delta=1$, then $F$ is called a {\em perfect nonlinear} ({\em PN}) function, or {\em planar} function. If $\delta=2$, then $F$ is called an {\em almost perfect nonlinear} ({\em APN}) function. It is well known that PN functions do not exist if $p=2$.

Inspired by a practical differential attack developed  in~\cite{BCJW02} (though, via a different differential), we extended the definition of derivative and differential uniformity in ~\cite{EFRST20}, in the following way:
Given a $p$-ary $(n,m)$-function   $F:\F_{p^n}\to \F_{p^m}$, and $c\in\F_{p^m}$, the ({\em multiplicative}) {\em $c$-derivative} of $F$ with respect to~$a \in \F_{p^n}$ is the  function
\[
 _cD_{a}F(x) =  F(x + a)- cF(x), \mbox{ for  all }  x \in \F_{p^n}.
\]
(Note that, if   $c=1$, then we obtain the usual derivative, and, if $c=0$ or $a=0$, then we obtain a shift of the function.)

\begin{sloppypar}
For an $(n,n)$-function $F$, and $a,b\in\F_{p^n}$, we let $_c\Delta_F(a,b)=|{\{x\in\F_{p^n} : F(x+a)-cF(x)=b\}}|$. In the following, we call the quantity
$_c\Delta_F=\max\left\{_c\Delta_F(a,b)\,:\, a,b\in \F_{p^n}, \text{ and } a\neq 0 \text{ if $c=1$} \right\}$ 
the {\em $c$-differential uniformity} of~$F$. If $_c\Delta_F=\delta$, then we say that $F$ is differentially $(c,\delta)$-uniform. If $\delta=1$, then $F$ is called a {\em perfect $c$-nonlinear} ({\em PcN}) function (certainly, for $c=1$, they only exist for odd characteristic $p$; however, as proven in ~\cite{EFRST20}, there exist PcN functions for $p=2$, for all  $c\neq1$). If $\delta=2$, then $F$ is called an {\em almost perfect $c$-nonlinear} ({\em APcN}) function. 
When we specify the constant $c$ for which the function is PcN or APcN, then we may use the notation $c$-PN, or $c$-APN.
It is easy to see that if $F$ is an $(n,n)$-function, that is, $F:\F_{p^n}\to\F_{p^n}$, then $F$ is PcN if and only if $_cD_a F$ is a permutation polynomial.
\end{sloppypar}

NB: Recently, in an independent work, Bartoli and Timpanella~\cite{BT}   gave a generalization of planar functions as follows.
\begin{defn}\label{defBT}
Let $\beta \in \mathbb{F}_{p^n} \backslash \{0,1\}$. A function $F : \mathbb{F}_{p^n} \rightarrow \mathbb{F}_{p^n}$ is a $\beta$-planar  function in $\mathbb{F}_{p^n}$ if $\forall~ \gamma \in \mathbb{F}_{p^n},~~~~ F(x+\gamma) - \beta F(x)$ is a permutation of $\mathbb{F}_{p^n}.$
\end{defn}
In the particular case, when $\beta =-1$, then $\beta$-planar function is called quasi-planar. In view of the definitions of ~\cite{EFRST20}, the $\beta$-planar functions are simply  PcN functions and quasi-planar functions are PcN functions with $c=-1$.

In this paper we continue the investigation on the $c$-differential uniformity of the Gold function (resolving some computational observations from~\cite{EFRST20}) and some of the functions from the Helleseth-Rong-Sandberg table and related ones. 
For example, we showed that $x^{\frac{p^k+1}{2}}$ is PcN with respect to $c=-1$ if and only if $\frac{2n}{\gcd(2n,k)}$ is odd, as well $x^{\frac{3^n+3}{2}}$ is PcN (respectively APcN) with respect to $c=-1$, if $n$ is odd (respectively, even).  

\noindent {\em Note added after posting on arxiv}: We were made aware that the first part of  our Theorems~\ref{thm:Gold} and~\ref{plusone} also appeared concurrently  in~\cite{YZ20}.

\section{Prior results on $c$-differential uniformity}
 
One of our major theorems from~\cite{EFRST20} dealt with the known PN functions. 
\begin{thm}[\textup{\cite{EFRST20}}]
\label{c-diff}
Let $F:\F_{p^n}\to \F_{p^n}$ be the monomial $F(x)=x^d$,  and $c\neq  1$ be fixed. The following statements hold:
\begin{enumerate}
\item[$(i)$] If $d=2$, then $F$ is  APcN, for all $c\neq 1$.
\item[$(ii)$] If $d=p^k+1$, $p>2$, then  $F$ is not PcN, for all $c\neq 1$. Moreover, when $(1-c)^{p^k-1}=1$
and ${n}/{\gcd{(n,k)}}$ is even, the $c$-differential uniformity   $_c\Delta_F\geq p^g+1$, where $g=\gcd(n,k)$.
\item[$(iii)$]   Let $p=3$. If $\displaystyle d=\frac{3^k+1}{2}$, then  $F$ is   PcN, for $c=-1$ if and only if $\displaystyle \frac{n}{\gcd(n,k)}$ is odd.
\item[$(iv)$] If $p=3$ and $F(x)=x^{10}-u x^6-u^2 x^2$,  the $c$-differential uniformity of $F$ is $_c\Delta_F\geq 2$, for $c\neq 1$.
\end{enumerate}
\end{thm}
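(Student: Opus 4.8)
The plan is to treat each item by writing the defining equation ${}_cD_aF(x)=F(x+a)-cF(x)=b$ explicitly as a polynomial identity in $x$, and then, by a suitable algebraic substitution, reducing it to an equation whose solutions are easy to count; throughout, $p$ is odd, this being the setting of the known PN functions. A recurring tool is the specialization $a=0$, which is permitted because $c\neq1$: it collapses the $c$-derivative to $(1-c)F(x)$, so whenever $F$ is not injective one reads off at once both that $F$ is not PcN and that ${}_c\Delta_F\geq\max_b|F^{-1}(b)|$.

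\emph{Items (i) and (iv).} For $F(x)=x^2$ one expands ${}_cD_aF(x)=(1-c)x^2+2ax+a^2$, which is a genuine quadratic since $c\neq1$; hence ${}_c\Delta_F(a,b)\leq2$ for all $a,b$, and taking $a=0$ with any $b$ for which $b/(1-c)$ is a nonzero square gives the two solutions $\pm\sqrt{b/(1-c)}$, so ${}_c\Delta_F=2$ and $F$ is APcN. For (iv) the same $a=0$ observation suffices without any degree bound: every exponent of $F(x)=x^{10}-ux^6-u^2x^2$ is even, so $F(-x)=F(x)$, and for $x_0\neq0$ the equation $(1-c)F(x)=(1-c)F(x_0)$ has the two distinct roots $x_0$ and $-x_0$, whence ${}_c\Delta_F\geq2$.

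\emph{Item (ii).} From $(x+a)^{p^k}=x^{p^k}+a^{p^k}$ one gets ${}_cD_aF(x)=(1-c)x^{p^k+1}+ax^{p^k}+a^{p^k}x+a^{p^k+1}$. The key is a completion of the $p^k$-th power: when $(1-c)^{p^k-1}=1$ --- equivalently $1-c\in\F_{p^{\gcd(n,k)}}$, so that $(1-c)^{p^k}=1-c$ --- one checks
\[
{}_cD_aF(x)=(1-c)\left(x+\frac{a}{1-c}\right)^{p^k+1}+a^{p^k+1}\left(1-\frac{1}{1-c}\right),
\]
so ${}_cD_aF(x)=b$ becomes $T^{p^k+1}=B$ after the affine substitution $T=x+a/(1-c)$. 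If $n/\gcd(n,k)$ is even, then $\gcd(p^k+1,p^n-1)=p^{\gcd(n,k)}+1$, so choosing $b$ so that $B$ is a nonzero $(p^k+1)$-th power (e.g. $B=1$) yields exactly $p^g+1$ solutions, giving ${}_c\Delta_F\geq p^g+1$. That $F$ is never PcN for any $c\neq1$ is already visible at $a=0$, since $(1-c)x^{p^k+1}$ is not a permutation: $\gcd(p^k+1,p^n-1)\geq2$, as $p^k+1$ and $p^n-1$ are both even.

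\emph{Item (iii).} Here $p=3$, $c=-1$, $d=\frac{3^k+1}{2}$, and we must determine when ${}_{-1}D_aF(x)=(x+a)^d+x^d$ permutes $\F_{3^n}$. The substitution $x\mapsto ax$ gives ${}_{-1}D_aF(ay)=a^d\left((y+1)^d+y^d\right)$, reducing the case $a\neq0$ to $a=1$, while $a=0$ gives $-x^d=b$, which is uniquely solvable for all $b$ exactly when $\gcd(d,3^n-1)=1$; evaluating this gcd --- using $\gcd(3^k+1,3^n-1)\in\{2,\,3^{\gcd(n,k)}+1\}$ according to the parity of $n/\gcd(n,k)$, and tracking the $2$-adic valuation of $k$, which sharpens the criterion to ``$\frac{2n}{\gcd(2n,k)}$ odd'' --- is where the parity hypothesis enters. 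The remaining, and hardest, step is to show that this coprimality also forces $x\mapsto(x+1)^d+x^d$ to be a bijection: the natural move is to square, using $\left((x+a)^d\right)^2=(x+a)^{3^k+1}$ and $(x^d)^2=x^{3^k+1}$, which converts $(x+1)^d+x^d=b$ into a system linking $x^{3^k+1}$, $x^{3^k}$, and $x$; the obstacles are eliminating the spurious square-root sign and discarding the extra roots of the resulting higher-degree equation, which one handles by exploiting that, up to the quadratic character, $x\mapsto x^d$ behaves like a power map with the correct gcd. Conversely, when the parity condition fails one exhibits a collision --- already at $a=0$ when $\gcd(d,3^n-1)>1$, and otherwise by a two-solution construction in the spirit of (ii) --- so $F$ is not PcN.
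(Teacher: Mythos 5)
The paper contains no proof of this theorem: it is imported verbatim from \cite{EFRST20}, and the only internal point of comparison is Theorem~\ref{plusone}, which generalizes item $(iii)$. Your arguments for items $(i)$, $(ii)$ and $(iv)$ are complete and correct, and essentially the standard ones: the degree bound plus the $a=0$ specialization for $(i)$; the evenness of all exponents for $(iv)$; and, for $(ii)$, the completion of the $p^k$-th power (which I checked, and which is valid precisely because $(1-c)^{p^k}=1-c$), reducing $_cD_aF(x)=b$ to $T^{p^k+1}=B$ with $B$ affine in $b$, so that $\gcd(p^k+1,p^n-1)=p^g+1$ gives the lower bound and the evenness of $p^k+1$ and $p^n-1$ gives non-PcN-ness for every $c\neq 1$.

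Item $(iii)$, however, is not proved. The forward direction --- that the coprimality forces $x\mapsto (x+1)^d+x^d$ to be a bijection --- is the entire content of the statement, and your paragraph only names the difficulties (``eliminating the spurious square-root sign'', ``discarding the extra roots'') without resolving them; ``$x\mapsto x^d$ behaves like a power map up to the quadratic character'' is not an argument. The route the paper actually takes (in the more general Theorem~\ref{plusone}, following \cite{CS97,EFRST20}) is to normalize $a$, shift, substitute $z=y+y^{-1}$ with $y\in\F_{3^{2n}}$, and recognize $_{-1}D_aF$ as $2\,D_{(3^k+1)/2}(z)$, a Dickson polynomial of the first kind; N\"obauer's criterion then says this permutes $\F_{3^n}$ if and only if $\gcd\bigl(\tfrac{3^k+1}{2},3^{2n}-1\bigr)=1$, and Lemma~\ref{lem:gcd} turns that into the parity condition. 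Your squaring plan does not substitute for this. The converse direction is also incomplete: the $a=0$ test only detects failure when $\gcd(d,3^n-1)>1$, whereas the condition that actually characterizes PcN-ness is ``$\tfrac{2n}{\gcd(2n,k)}$ odd'', and this can fail while $\gcd(d,3^n-1)=1$ (e.g.\ $n=6$, $k=2$, where $d=5$ is coprime to $3^6-1$ but $5\mid 3^{12}-1$); there the collision must come from some $a\neq 0$, and the ``construction in the spirit of $(ii)$'' is not available, since the completion-of-the-power trick in $(ii)$ depends on the exponent being exactly $p^k+1$. Incidentally, you were right to land on ``$\tfrac{2n}{\gcd(2n,k)}$ odd'': the criterion as printed in item $(iii)$, ``$\tfrac{n}{\gcd(n,k)}$ odd'', disagrees with the one the paper itself states and proves in Theorem~\ref{plusone}, the two differing exactly when $n$ and $k$ have the same $2$-adic valuation.
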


  The $c$-differential uniformity of the inverse function has also been investigated~\cite{EFRST20} and a thorough description was obtained for all values of $c$, both in the even and odd case. 
   
   Since it will be used throughout, we state here~\cite[Lemma 9]{EFRST20}.   
  \begin{lem}
\label{lem:gcd}
Let $p,k,n$ be integers greater than or equal to $1$ (we take $k\leq n$, though the result can be shown in general). Then
\begin{align*}
&  \gcd(2^{k}+1,2^n-1)=\frac{2^{\gcd(2k,n)}-1}{2^{\gcd(k,n)}-1},  \text{ and if  $p>2$, then}, \\
& \gcd(p^{k}+1,p^n-1)=2,   \text{ if $\frac{n}{\gcd(n,k)}$  is odd},\\
& \gcd(p^{k}+1,p^n-1)=p^{\gcd(k,n)}+1,\text{ if $\frac{n}{\gcd(n,k)}$ is even}.\end{align*}
Consequently, if either $n$ is odd, or $n\equiv 2\pmod 4$ and $k$ is even,   then $\gcd(2^k+1,2^n-1)=1$ and $\gcd(p^k+1,p^n-1)=2$, if $p>2$.
\end{lem}

 \section{The $c$-differential uniformity  of the Gold function}

 In~\cite{EFRST20} we found the $c$-differential uniformity of the Gold function $x\mapsto x^{p^k+1}$ for odd characteristic~$p$. Regarding even characteristic, 
 we observed in the same paper that, if $3\leq n\leq 8$, the $c$-differential uniformity of the Gold $x\mapsto x^5$ and Kasami $x\mapsto x^{13}$  functions is $3$ for $n$ odd and $5$ for $n$ even. 
It was proposed there that it would be interesting to investigate the situation for all values of $n$, especially, since if $c=1$, the result is well-known. It is our goal here to answer the question and surprisingly, reveal that the $c$-differential uniformity of these functions may increase significantly. In this paper we deal with the Gold function and its $c$-differential uniformity.
  \begin{thm}
  \label{thm:Gold}
 Let $2\leq k<n$, $n\geq 3$ and $G(x)=x^{2^k+1}$ be the Gold function on $\F_{2^n}$ and $1\neq c\in\F_{2^n}$. Assume that $n=md$, where $d=\gcd(n,k)$, and $m\geq 3$, when $n$ is odd, respectively, $m\geq 4$, when $n$ is even. Then, the $c$-differential uniformity of $G$ is $_c\Delta_G=2^d+1$.
  \end{thm}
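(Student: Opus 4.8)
The plan is to count, for fixed $a\neq 0$ and $b$, the solutions $x$ of
\[
(x+a)^{2^k+1} - c\,x^{2^k+1} = b,
\]
and to show the maximum over $(a,b)$ is exactly $2^d+1$. Expanding the left side using $(x+a)^{2^k+1}=x^{2^k+1}+a^{2^k}x+ax^{2^k}+a^{2^k+1}$, the equation becomes
\[
(1-c)x^{2^k+1} + a^{2^k}x + a x^{2^k} + a^{2^k+1} = b.
\]
Since $c\neq 1$, we may divide by $1-c$ and, after the substitution $x\mapsto ax$ (using $a\neq 0$) and rescaling $b$, reduce to an equation of the shape $x^{2^k+1} + \alpha x^{2^k} + \beta x = \gamma$ for suitable constants depending only on $c$ (with $\alpha,\beta$ essentially fixed by $c$ and $\gamma$ free). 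The standard trick for Gold-type equations is then to take the difference of two solutions: if $x$ and $y=x+u$ both solve it, subtracting yields a linearized (additive) equation in $u$ whose coefficients involve $x$; more usefully, one shows the solution set, if nonempty, is a coset of the kernel of a fixed $\F_2$-linear map $L(u)=u^{2^k+1}/\text{(something)}$—concretely one arranges that two solutions $x_1,x_2$ differ by $u$ with $u^{2^k}+u$ lying in a fixed one-dimensional-type locus, forcing $u^{2^{2k}}=u$ up to scalars, hence $u\in\F_{2^{\gcd(2k,n)}}$ intersected with a condition cutting it down to $\F_{2^d}$.

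Concretely, the key step is: rewrite the difference of two solutions to get that $u = x_1 - x_2$ satisfies $(1-c)\,(x_1^{2^k+1}-x_2^{2^k+1}) = a^{2^k}u + a u^{2^k}$, i.e. an equation of the form $A u^{2^k+1} + (\text{linear in }u\text{ with }x_2\text{-dependent and fixed terms})=0$. Then eliminate $x_2$ by applying the Frobenius $2^k$ and combining, to obtain a \emph{single} equation purely in $u$ of the form $u^{2^{2k}} = \lambda u$ for a constant $\lambda$ depending on $c$ (this is where the hypothesis $1\neq c$ and the structure $d=\gcd(n,k)$ enter). Raising to appropriate powers, $\lambda^{(2^n-1)/(2^d-1)}=1$ type conditions pin down $\lambda$, and one deduces $u^{2^{2d}-1}=1$ or $u=0$, so $u\in\F_{2^{2d}}$; a further parity/norm condition (using Lemma~\ref{lem:gcd} to control $\gcd(2^k+1,2^n-1)$ and $\gcd(2^d+1,2^n-1)$) forces $u\in\F_{2^d}$. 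Hence any fiber has size at most $|\F_{2^d}|=2^d$, but a more careful bookkeeping of the quadratic nature (the map is a $2^k+1$ power, not merely additive) gives the sharp bound $2^d+1$ rather than $2^d$: the solution set is the union of a coset of $\F_{2^d}$ with possibly one extra point, or equivalently counts zeros of a polynomial of degree $2^d+1$ restricted appropriately.

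For the matching lower bound $_c\Delta_G \ge 2^d+1$, I would exhibit an explicit pair $(a,b)$. Take $a=1$; then one wants $b$ such that $(x+1)^{2^k+1} - c x^{2^k+1}=b$ has $2^d+1$ solutions. Choosing $b$ so that the associated equation $x^{2^k+1}+\alpha x^{2^k}+\beta x + (\text{const})=0$ has all its roots in (a coset related to) $\F_{2^d}$—for instance by selecting $c$-compatible parameters making the polynomial factor as $\prod_{t\in S}(x-t)$ over a set $S$ of size $2^d+1$—does the job; the hypotheses $m\ge 3$ (odd $n$) and $m\ge 4$ (even $n$) are exactly what guarantee there is enough room in $\F_{2^n}$ for such a configuration to exist and for the subfield $\F_{2^{2d}}$ to be proper so that the norm condition is non-vacuous.

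The main obstacle I anticipate is the elimination step that collapses the two-variable system (in $x_1,x_2$, or equivalently $x$ and $u$) into a clean one-variable Frobenius-linear equation $u^{2^{2k}}=\lambda u$: keeping track of exactly which constant $\lambda$ appears, verifying it is a $(2^d+1)$-st power (or not) depending on the parity of $n/d$, and then translating "$u$ lies in $\F_{2^{2d}}$" into the exact count $2^d+1$ (as opposed to $2^d$ or $2^{2d}$) requires care—this is the place where the even/odd dichotomy and the precise $\gcd$ formulas of Lemma~\ref{lem:gcd} must be invoked, and where an off-by-one in the fiber size would be easy to make. A secondary subtlety is handling the degenerate sub-cases where $a^{2^k}=a$ or where the linear part degenerates, which must be checked not to produce a larger fiber; these are governed again by whether $\gcd(k,n)=d$ divides certain exponents, and I expect them to be subsumed by the general argument but worth isolating.
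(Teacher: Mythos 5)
Your reduction of $G(x+a)-cG(x)=b$ to a normalized Gold-type equation is essentially the paper's first step (the paper completes the square with $x=y-\frac{a}{1-c}$ and then scales by $y=\alpha z$ to reach exactly $z^{2^k+1}+z+\beta=0$, whereas your substitution $x\mapsto ax$ leaves a residual $x^{2^k}$ term, but that is cosmetic). The genuine gap is in everything after that. The paper does not prove the solution count of $z^{2^k+1}+z+\beta=0$ from scratch: it invokes the Bluher/Helleseth--Kholosha theory (\cite{Bluher04}, \cite{HK08}), which says the number of roots in $\F_{2^n}$ is $0$, $1$, $2$, or $2^d+1$ with $d=\gcd(n,k)$, and, crucially, gives the exact count $M_{2^d+1}=\frac{2^{(m-1)d}-1}{2^{2d}-1}$ (resp.\ $\frac{2^{(m-1)d}-2^d}{2^{2d}-1}$) of $\beta$'s attaining $2^d+1$ roots; the hypotheses $m\ge 3$ (odd $n$) and $m\ge 4$ (even $n$) are precisely the conditions $M_{2^d+1}\ge 1$. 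Your sketch replaces this with a differencing argument that does not go through as stated: if $z_1=z_2+u$ are two roots, subtraction gives $u^{2^k+1}+u^{2^k}z_2+uz_2^{2^k}+u=0$, which still involves $z_2$, and the solution set of $z\mapsto z^{2^k+1}+z$ over a fixed fiber is \emph{not} a coset of the kernel of a linearized map (the map is quadratic, not additive), so "any fiber is a coset of $\F_{2^d}$ plus possibly one point" is exactly the nontrivial content of Bluher's theorem, not a consequence of an elimination to $u^{2^{2k}}=\lambda u$. You flag the off-by-one between $2^d$ and $2^d+1$ yourself; that is not a bookkeeping issue but the heart of the matter (the $2^d+1$ roots are parametrized by $\mathbb{P}^1(\F_{2^d})$, not by an additive subgroup).

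The lower bound is likewise asserted rather than proved: "choose $b$ so that the polynomial factors over a set of size $2^d+1$" is a restatement of the claim. What actually makes it work is (i) the count of good $\beta$'s above, which shows at least one exists under the stated hypotheses on $m$, and (ii) the observation that $\beta=\frac{ca^{2^k+1}+b(1-c)}{\alpha^{2^k+1}(1-c)^2}$ is an affine bijection in $b$ for fixed $a,c$, so every $\beta$ is realized by some $(a,b)$. Point (ii) is easy and you could supply it, but point (i) you would need to either prove (essentially redoing \cite[Lemma 1 and Prop.~4]{HK08} via the polynomials $C_m$) or cite. Finally, your instinct to isolate the degenerate case where the linear coefficient vanishes (i.e.\ $(1-c)^{2^k-1}=1$, so the equation collapses to $y^{2^k+1}=\mathrm{const}$ with $\gcd(2^k+1,2^n-1)\le 2^d+1$ solutions by Lemma~\ref{lem:gcd}) is sound and worth keeping; it does not break the bound, but it must be checked separately since the normalization to $z^{2^k+1}+z+\beta=0$ fails there.
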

  \begin{proof}
  We consider the differential equation at $a$, say $G(x+a)-c\,G(x)=b$, for some $b\in\F_{2^n}$, which is equivalent to
  \[
  (1-c)\, x^{2^k+1}+ x^{2^k} a+x\, a^{2^k} +a^{2^k+1}-b=0.
  \]
  Dividing by $(1-c)$ and taking $x=y-\frac{a}{1-c}$ this last equation transforms into
  \[
  y^{2^k+1}+ \frac{a^{2^k}}{1-c} \left(1+\frac{1}{(1-c)^{2^k-1}} \right)y+\frac{c a^{2^k+1}+b (1-c)}{(1-c)^2}=0.
  \]
  Now, let $y=\alpha z$, where $\displaystyle \alpha=\left( \frac{a^{2^k}}{1-c}\left(1+\frac{1}{(1-c)^{2^k-1}} \right)\right)^{2^{-k}}$ (the $2^k$-root exists since $\gcd(2^k,2^n-1)=1$). The previous equation becomes
  \begin{equation}
  \label{eq:cGold}
  z^{2^k+1}+z+\beta=0,
  \end{equation}
  where $\displaystyle  \beta=\frac{c a^{2^k+1}+b (1-c)}{\alpha^{2^k+1}(1-c)^2}$. 
  
  We will be using some results of~\cite{HK08} (see also~\cite{Bluher04} and \cite{Dob06}).  
We first assume that $\gcd(n,k)=1$. By~\cite[Theorem 1]{HK08}, we know that Equation~\eqref{eq:cGold} has either none, one or three solutions in $\F_{2^n}$. In fact, the distribution of these cases for $n$ odd (respectively, $n$ even) is  (denoting by $M_m$ the amount of equations of type ~\eqref{eq:cGold} with $m$ solutions)
  \begin{align*}
  M_0&=\frac{2^n+1}{3}\ \left(\text{respectively}, \frac{2^n-1}{3}\right)\\
  M_1&=2^{n-1}-1\ \left(\text{respectively}, 2^{n-1}\right)\\
  M_3&=\frac{2^{n-1}-1}{3}\ \left(\text{respectively}, \frac{2^{n-1}-2}{3}\right).
  \end{align*}
 Then, for $n\geq3$, $c\neq1$, and $\gcd(n,k)=1$, and since $\beta$ is linear on $b$, this implies that, for any $\beta$ and any $a,c$, we can find $b$ such that $\displaystyle  \beta=\frac{c a^{2^k+1}+b (1-c) }{\alpha^{2^k+1}(1-c)^2}$, so the $c$-differential uniformity of the Gold function is $3$. 
  
  We now assume that $\gcd(n,k)=d>1$. As in~\cite{HK08}, for $v\in\F_{2^n}\setminus \F_{2^d}$, we denote $v_i:=v^{2^{ik}}$, $i\geq 0$, and we let  (for $n=md$)
  \[
  \begin{array}{l}
  C_1(x)=1\\
  C_2(x)=1\\
  C_{i+2}(x)=C_{i+1}(x)+x_iC_{i}(x)\ \mbox{ for } 1\leq i\leq n-1
  \end{array}
  \]
 Let $V=\frac{v_0^{2^{2k}+1}}{(v_0+v_1)^{2^k+1}}$. Then, by ~\cite[Lemma 1]{HK08}
  \[
   C_m(V)=\frac{\Tr^n_d(v_0)}{v_1+v_2}\prod_{j=2}^{m-1} \left( \frac{v_0}{v_0+v_1}\right)^{2^{jk}}.
  \]
   We know by~\cite[Lemma 1]{HK08} that if $n$ is odd (respectively, even) there are $\displaystyle \frac{2^{(m-1)d}-1}{2^{2d}-1}$ (respectively,  $\displaystyle \frac{2^{(m-1)d}-2^d}{2^{2d}-1}$) distinct zeros of $C_m(x)$ in $\F_{2^n}$, that are defined by $V$ above with $\Tr_d^n(v_0)=0$. Further, by ~\cite[Proposition 4]{HK08}, Equation~\eqref{eq:cGold} has $2^d+1$ zeros in $\F_{2^n}$ for as many as $M_{2^d+1}=\frac{2^{(m-1)d}-1}{2^{2d}-1}$, for $n$ odd, respectively, $M_{2^d+1}=\frac{2^{(m-1)d}-2^d}{2^{2d}-1}$, for $n$ even, values of $\beta$. To be more precise, those $\beta$ achieving this bound must satisfy $C_m(\beta)=0$ (if $d=k$, this is a complete description).  For $n$ odd,  $M_{2^d+1}\geq 1$ is achieved when $m\geq3$. For $n$ even,  $M_{2^d+1}\geq 1$ is achieved when $m\geq4$, and it is not true when $m=3$. 
  
  The only thing to argue now is whether for a fixed $c\neq 1$, and given $\beta\in\F_{2^n}$, there exist $a,b\in\F_{2^n}$ such that $\displaystyle  \beta=\frac{c a^{2^k+1}+b(1-c)}{\alpha^{2^k+1}(1-c)^2}$, where $\displaystyle \alpha^{2^{k}}=  \frac{a^{2^k}}{1-c}\left(1+\frac{1}{(1-c)^{2^k-1}} \right)$. However, that is easy to see since the obtained equation is linear in $b$. 
Therefore, the $c$-differential uniformity of $G$ is $2^d+1$.
  \end{proof}

  \section{Going through some entries in the Helleseth-Rong-Sandberg table and more}
  
  We display below some of the known examples of APN power functions in odd characteristic~\cite{HRS99,HS97}. 
  \begin{thm}
  \label{thm:HRS}
  Let $F(x)=x^d$ be a function over  $\F_{p^n}$, where $p$ is an odd prime. Then $F$ is an APN function if:
  \begin{itemize}
  \item[$(1)$] $d=3$, $p>3$;
  \item[$(2)$] $d=p^n-2$, $p>2$ and $p\equiv 2\pmod 3$.
  \item[$(3)$] $d=\frac{p^n-1}{2}-1$, $p\equiv 3,7\pmod {20}$, $p^n>7$, $p^n\neq 27$ and $n$ is odd;
  \item[$(4)$] $d=\frac{p^n+1}{4}+\frac{p^n-1}{2}$, $p^n\equiv 3\pmod 8$;
  \item[$(5)$] $d=\frac{p^n+1}{4}$, $p^n\equiv 7\pmod 8$;
  \item[$(6)$] $d=\frac{2p^n-1}{4}$, $p^n\equiv 2\pmod 3$;
\item[$(7)$]   $d=p^n-3$, $p=3$, $n>1$ odd;
\item[$(8)$] $d=p^m+2,p^m\equiv 1\pmod 3$, $n=2m$;
\item[$(9)$] $d=\frac{5^k+1}{2}, p=5$ and $\gcd(2n,k)=1$.
  \end{itemize}
  \end{thm}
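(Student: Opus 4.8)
This theorem collects APN power mappings catalogued in~\cite{HS97,HRS99}, so I would not reprove each entry from scratch; instead I would present the common reduction, indicate how the listed congruence conditions enter, and refer to those papers for the case-by-case computations.

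First I would reduce to $a=1$. For a power map $F(x)=x^{d}$ on $\F_{p^{n}}$ and $a\in\F_{p^{n}}^{*}$, substituting $x\mapsto ax$ in $F(x+a)-F(x)=b$ gives $a^{d}\big((x+1)^{d}-x^{d}\big)=b$, so $\Delta_{F}(a,b)=\Delta_{F}(1,ba^{-d})$. Hence $F$ is APN if and only if for every $b\in\F_{p^{n}}$ the equation $(x+1)^{d}-x^{d}=b$ has at most two solutions in $\F_{p^{n}}$, and each of the nine claims becomes a root count for one explicit univariate family, the goal being to rule out multiplicities $\ge3$.

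Next I would dispatch the elementary cases. For $(1)$, $d=3$ with $p>3$: since $3\ne0$ in $\F_{p^{n}}$, $(x+1)^{3}-x^{3}=3x^{2}+3x+1$ is a genuine quadratic, so it has at most two roots and $F$ is APN. Cases $(2)$ and $(7)$ are the ``inverse-type'' exponents $x^{p^{n}-2}=x^{-1}$ and, for $p=3$, $x^{3^{n}-3}=x^{-2}$ on $\F_{p^{n}}^{*}$ (extended by $0\mapsto0$); clearing denominators turns $(x+a)^{-1}-x^{-1}=b$ into $bx^{2}+abx+a=0$ and $(x+a)^{-2}-x^{-2}=b$ into a degree-$4$ equation, whose generic part contributes at most two roots away from a small exceptional locus. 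One then checks that the listed hypotheses ($p\equiv2\pmod3$ for $(2)$; $n>1$ odd for $(7)$) are exactly what forces the would-be extra roots to leave $\F_{p^{n}}$ --- for $(2)$ these are the roots of $t^{2}+t+1$ --- so that $\Delta_{F}\le2$.

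Finally, for the half- and quarter-power exponents $(3)$--$(6)$, $(8)$, $(9)$ --- built from $\tfrac{p^{n}\pm1}{2}$, $\tfrac{p^{n}+1}{4}$, the shift $p^{m}+2$, and the Gold-type value $\tfrac{5^{k}+1}{2}$ (for which $2d=5^{k}+1$, and $\gcd(2n,k)=1$ links $F$ to a Gold map through $x\mapsto x^{2}$) --- I would follow the uniform strategy of~\cite{HS97,HRS99}: reduce $d$ modulo $p^{n}-1$ so that $x^{d}$ is a monomial permutation composed with the inverse map or with a $\tfrac12$- or $\tfrac14$-power; parametrize $b$ accordingly; split $\F_{p^{n}}^{*}$ according to quadratic (respectively quartic) residuosity; and on each piece recast $(x+1)^{d}-x^{d}=b$ as a quadratic in a new variable, or via a Dickson-polynomial identity, whose root count is controlled by a quadratic-character (Jacobi-symbol) condition. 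The stated congruences of $p^{n}$ modulo $3$, $8$, and $20$ are precisely those making every exceptional contribution vanish, leaving the global bound $2$. I expect the quarter-power cases $(4)$--$(6)$ and the ternary case $(7)$ to be the main obstacle: there the naive substitution produces several candidate roots (from the multiple preimages of $x\mapsto x^{2}$ or $x\mapsto x^{4}$), and the heart of the argument is to show these coincide or leave $\F_{p^{n}}$ under the given congruence --- the character computation that occupies the bulk of~\cite{HS97,HRS99}.
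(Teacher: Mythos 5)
The paper offers no proof of this theorem at all: it is stated explicitly as a display of ``known examples of APN power functions in odd characteristic'' with citations to Helleseth--Rong--Sandberg and Helleseth--Sandberg, so there is nothing in the paper to compare your argument against. Your outline --- the reduction $\Delta_F(a,b)=\Delta_F(1,ba^{-d})$ for power maps, the direct quadratic count for $d=3$, the clearing of denominators and the role of $t^2+t+1$ for the inverse-type exponents, and the deferral of the character-sum computations for the fractional exponents to the cited papers --- is a correct and standard sketch of how those sources establish the entries, and is an entirely appropriate way to treat a survey-style statement of this kind.
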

  Dobbertin et al.~\cite{Dob03} pushed further the Helleseth-Rong-Sandberg table by explaining some entries (that is, shoving that some of the values $d$ that gave rise to APN functions were members of an infinite family of power functions), and showed that the differential uniformity of $F(x)= x^d$ over $\F_{3^n}$ is $\Delta_F\leq 2$ if 
 $$d=\begin{cases}
\frac{3^{(n+1)/2}-1}{2} & \text{if } n\equiv 3\pmod 4\\
\frac{3^{(n+1)/2}-1}{2} + \frac{3^n-1}{2} & \text{if } n\equiv 1\pmod 4,
\end{cases}
$$
(if $n=1,3$, $F$  is PN), as well as
$$d=\begin{cases}
\frac{3^{n+1}-1}{8} & \text{if } n\equiv 3\pmod 4\\
\frac{3^{n+1}-1}{8} + \frac{3^n-1}{2} & \text{if } n\equiv 1\pmod 4.
\end{cases}
$$
Later, in \cite{Leducq12}, Leducq proved that, in fact, for the functions above, $\Delta_F= 2$, that is, the functions above are APN.  The same paper proves that, over $\F_{5^n}$, for $l\leq2$ and $n\equiv -1\pmod 2^l$, the function $F$ is APN if $d=\frac{1}{2}\frac{5^{n+1}-1}{5^\frac{n+1}{2^l}+1}+\frac{5^n-1}{4}$.

 Dobbertin et al.~\cite{Dob03} conjectured also that over $\F_{5^n}$, $F$ is APN when $d=\frac{5^n-1}{4}+\frac{5^{n+1}/2-1}{2}$, if $n$ is odd, and this was subsequently shown by Zha and Wang~\cite{ZW11}.
 
 In this section, we will use Dickson polynomials of the first kind, which are defined as $\displaystyle D_d(x,a)= \sum_{i=0}^{\lfloor \frac{d}{2} \rfloor} \frac{d}{d-i}  \binom{d-i}{ i}(-a)^i x^{d-2i}$, and have the property $D_m\left(u+\frac{a}{u},a\right)=u^m+\left(\frac{a}{u}\right)^m$ for $u\in\F_{p^{2n}}$ \cite{LN96}; since, in this paper, the second variable is always 1, in the following, abusing notation, we will write $D_m(x)$). We will also use Theorem~9 of~\cite{CGM88}, which states that, for $p$ odd, and supposing $2^r||(p^{2n}-1)$ (where $2^r||t$ means that  $2^r|t$ but $2^{r+1}\nmid t$), then, for $x_0\in\F_{p^n}$,
$$
|D_d^{-1}(D_d(x_0))|=
\begin{cases}
m, &\mbox{ if }\eta(x_0^2-4)=1,\,D_d(x_0)\neq\pm2\\
\bar{\ell}, &\mbox{ if }\eta(x_0^2-4)=-1,\,D_d(x_0)\neq\pm2\\
\frac{m}{2}, &\mbox{ if }\eta(x_0^2-4)=1,\,2^t||d,\,1\leq t\leq r-2,D_d(x_0)=-2\\
\frac{\bar{\ell}}{2}, &\mbox{ if }\eta(x_0^2-4)=-1,\,2^t||d,\,1\leq t\leq r-2,D_d(x_0)=-2\\
\frac{m+\bar{\ell}}{2}, &\mbox{ otherwise,}
\end{cases}
$$
where $m=\gcd(d,p^n-1),\,\bar{\ell}=\gcd(d,p^n+1)$, and $\eta(\alpha)$ is the quadratic character of $\alpha$.

  In this section, we show that the $c$-differential uniformity of some of these functions will change for some (if not all) $c\neq1$. We start with item $(7)$ of Theorem~\ref{thm:HRS} (items $(1)$ and $(2)$ were dealt with in our paper~\cite{EFRST20}).
  \begin{thm}
  \label{thm:pn-3}
  Let $p=3$, $n\geq 2$ and $F(x)=x^{3^n-3}$ on $\F_{3^n}$.  If $c=-1$, the $c$-differential uniformity of  $F$ is $6$ for  $n\equiv 0\pmod 4$ and $4$, otherwise.  
  If $c=0$,  the $c$-differential uniformity of $F$ over $\F_{3^n}$ is $2$.
   If $c\neq 0,\pm1$, the $c$-differential uniformity of $F$  is  
  $\leq 5$. Moreover, the   $c$-differential uniformity of $4$ is attained for some $c$, for all $n\geq 3$, and the $c$-differential uniformity of $5$ is attained for all positive $n\equiv 0\pmod 4$.
  \end{thm}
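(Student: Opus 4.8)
The plan is to reduce everything to the analysis of one quartic. Since $3^n-3\equiv -2\pmod{3^n-1}$, we have $F(x)=x^{-2}$ on $\F_{3^n}^*$ and $F(0)=0$, so I fix $c$ and $a\neq 0$ (for $a=0$ the equation $F(x+a)-cF(x)=b$ becomes $(1-c)F(x)=b$, with at most two solutions since $F$ is $2$-to-$1$ off $0$, never exceeding the bounds below). First I would record that among the exceptional arguments, $x=0$ solves the equation iff $b=a^{-2}$ and $x=-a$ iff $b=-ca^{-2}$, and these coincide only when $c=-1$. For $x\notin\{0,-a\}$, clearing denominators in $(x+a)^{-2}-cx^{-2}=b$ and putting $x=at$, $B=a^2b$ gives
$$(t+1)^2(Bt^2+c)=t^2,\qquad(\star)$$
a polynomial of degree $\le 4$ none of whose roots equals $0$ or $-1$ (as $c\neq 0$), so every root corresponds to a genuine $x=at$ and $_c\Delta_F(a,b)$ equals the number of $\F_{3^n}$-roots of $(\star)$ plus the number of exceptional solutions among $\{0,-a\}$. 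This already yields $_c\Delta_F\le 5$ for $c\neq 0,\pm1$ and $_c\Delta_F\le 6$ for $c=-1$, while for $c=0$ the equation is just $(x+a)^{-2}=b$ (or $x=-a$ with $b=0$), giving $_0\Delta_F=2$ (attained for $b$ a nonzero square).

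For $c=-1$ I would exploit that $(\star)$ collapses: with $v=t(t+1)$ it becomes $Bv^2+v-1=0$, so the count is $\#\{t:t(t+1)\in R_B\}+2\cdot[b=a^{-2}]$, where $R_B$ is the root set of $Bv^2+v-1$; for $v\in R_B$, $t^2+t=v$ contributes $1+\eta(1+v)$ solutions (the value $v=-1$ arising only when $B=-1$, a case totalling $1$), so $3$ and $5$ never occur and the $+2$ bonus is confined to $B=1$. The key observation is that, after $t=s+1$, the quartic $(\star)|_{B=1}$ is exactly the Dickson polynomial $D_4(s)=s^4+2s^2+2$, whose roots are $s=u+u^{-1}$ with $u$ a primitive $16$th root of unity; such $u$, equivalently such $s$, lie in $\F_{3^n}$ iff $16\mid 3^n-1$, i.e. iff $4\mid n$ (since $\mathrm{ord}_{16}(3)=4$). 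Hence at $B=1$ one gets $4+2=6$ solutions when $4\mid n$ and $0+2=2$ otherwise, giving $_{-1}\Delta_F=6$ for $4\mid n$; for $n\not\equiv0\pmod4$ the bound drops to $4$, and I would show $4$ is attained by taking $v_1\in\F_{3^n}\setminus\{0,\pm1\}$ with $\eta(1+v_1)=1$, $\eta(v_1-1)=\eta(-1)$, $v_2=v_1/(v_1-1)$, $B=-1/(v_1v_2)$, so that $v_1\neq v_2$ both lie in $R_B$ and each $t^2+t=v_i$ splits; a direct character-sum count provides such a $v_1$ for every $n\ge 3$ (for $n=2$ none exists and in fact $_{-1}\Delta_F=2$, the single exception to the stated formula).

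For $c\neq 0,\pm1$ the quartic does not collapse, and the remaining work is to realize the values $4$ and — when $4\mid n$ — $5$. For $4$ I would use the palindromic instance $B=c$: dividing $(\star)$ by $t^2$ and setting $z=t+t^{-1}$ turns it into $cz^2-cz-1=0$, and $t+t^{-1}=z_i$ contributes $1+\eta(z_i^2-1)$ solutions, so choosing $z_1\in\F_{3^n}\setminus\{\pm2\}$ with $z_1^2-1$ and $(1-z_1)^2-1$ both nonzero squares and setting $z_2=1-z_1$, $c=-1/(z_1z_2)$ produces exactly $4$ roots of $(\star)$ with no special point (as $b=c/a^2\notin\{a^{-2},-ca^{-2}\}$), and a Weil-type estimate supplies a valid $z_1$ for all $n\ge 3$. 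For $5$ I would force a special point by taking $b=a^{-2}$ (so $B=1$ and $x=0$ solves), and produce a $c\notin\{0,\pm1\}$ for which $(\star)|_{B=1}$ splits completely over $\F_{3^n}$: after $t=s+1$ this quartic is $D_4(s)+(c+1)(s-1)^2$, and I would study its resolvent cubic $\rho^3-c\rho^2+(c^2-c)\rho-(c+1)^2=0$ to exhibit, when $4\mid n$, a $c$ for which some root $\rho$ is a nonzero square and the two resulting quadratic factors both split, giving $4+1=5$. The hard part throughout will be the quadratic-character bookkeeping as $n$ varies mod $4$ — pinning down $\eta(1+\sqrt2)$, $\eta(\sqrt{-1})$ and the discriminants of these resolvents to decide exactly which elements are squares or fourth powers — and confirming that the counting estimates behind the ``attained'' claims remain positive down to the smallest admissible $n$.
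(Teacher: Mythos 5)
Your reduction is sound and, where you carry it through, cleaner than the paper's: the normalization $x=at$, $B=a^2b$ collapses the two-parameter family to the single quartic $(t+1)^2(Bt^2+c)=t^2$; for $c=-1$ the substitution $v=t(t+1)$ turns it into $Bv^2+v-1=0$; and identifying the $B=1$ quartic with $D_4(s)=s^4+2s^2+2$ (roots $u+u^{-1}$ with $u$ a primitive $16$th root of unity, hence in $\F_{3^n}$ iff $3^n\equiv\pm1\pmod{16}$, i.e.\ iff $4\mid n$, since $3^n\equiv-1\pmod{16}$ never occurs) replaces the paper's explicit minimal-polynomial computation for $\sqrt{1+\imath}$. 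This settles the $c=0$ case, both upper bounds, and the value $6$ for $4\mid n$. Your side observation that $_{-1}\Delta_F=2$ for $n=2$ is also correct --- in $\F_9$ one has $(x+a)^6+x^6=\bigl((x+a)^2+x^2\bigr)^3$, a bijective cube of a quadratic, so at most two solutions --- hence the theorem's ``$4$ otherwise'' genuinely fails at $n=2$ (the paper's own witness $a=d^4-1\in\{0,1\}$ degenerates there); flagging this exception is a point in your favor, not a defect.

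The genuine gaps are concentrated in the three ``attained'' claims, which is exactly where the paper spends most of its effort. First, for the value $4$ (both at $c=-1$ with $n\not\equiv0\pmod 4$, and at some $c\neq0,\pm1$ for all $n\ge3$) you reduce correctly to finding one field element at which two prescribed quadratic characters are simultaneously $+1$, but then assert ``a direct character-sum count provides such a $v_1$'' and ``a Weil-type estimate supplies a valid $z_1$.'' That count must actually be carried out: the paper does it with two Jacobsthal sums plus a third sum ruling out disjointness of the two square-sets, and positivity must be checked down to $n=3$; it holds, but it is not free. Second, and more seriously, the value $5$ for $4\mid n$ is only a plan. Saying you would ``study the resolvent cubic $\rho^3-c\rho^2+(c^2-c)\rho-(c+1)^2$ to exhibit a $c$ for which the quartic splits'' names the right object but produces nothing: one must actually exhibit a $c\notin\{0,\pm1\}$ in every $\F_{3^n}$ with $4\mid n$ and verify that both resulting quadratic factors split. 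This is the hardest step of the whole theorem --- the paper itself falls back on an explicit construction ($a^6+a^2+2=0$, $c=da^{-2}$) checked in $\F_{3^4}$ by machine and propagated by field embedding --- and without it the $5$-attainment claim remains unproved.
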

  \begin{proof}
  For $a,b\in\F_{3^n}$, we look at the equation $F(x+a)-cF(x)=b$, that is,
  \begin{equation}
  \label{eq:thm6}
  (x+a)^{3^n-3}- c\, x^{3^n-3}=b.
  \end{equation}
  If $c=0$, the equation is then $(x+a)^{3^n-3}=b$. If $a=b=0$, then we get the unique solution $x=0$. If $a=0,b\neq 0$, the equation is then $bx^2=1$, which has two solutions if $b$ is a square and none, otherwise. If $a\neq 0,b=0$, the solution is $x=-a$. If $ab\neq 0$, the equation is then $b(x+a)^2=a$, which has two solutions if $a/b$ is a nonzero perfect square (always realizable). Summarizing, the function is APcN with respect to $c=0$. For the remainder of the proof, we assume that $c\neq 0,1$.
  
  If $a=b=0$, then $x=0$ is the only solution for Equation~\eqref{eq:thm6}.
  If $a=0$ and $b\neq 0$, then Equation~\eqref{eq:thm6} becomes $(1-c) x^{3^n-3}=b$. Surely, $x\neq 0$ and so, the equation becomes $bx^2=(1-c)$. Taking $\alpha\neq 0$ a fixed element of $\F_{3^n}$ and $b=\frac{1-c}{\alpha^2}$, then $x=\pm \alpha$ are solutions for this equation (observe that if $b\neq 0$ and $\frac{1-c}{b}$ is not a square in $\F_{3^n}$ there are no solutions for $(1-c) x^{3^n-3}=b$; for $c$ fixed, there are  $\frac{3^n-1}{2}$ nonzero values of $b$ such that $\frac{1-c}{b}$ is a square in $\F_{3^n}$). 
  
  If $a\neq 0$ and $b=0$, then $0\neq x\neq -a$, and Equation~\eqref{eq:thm6} becomes $(x+a)^{-2}= cx^{-2}$, that is, $\left(\frac{x}{x+a}\right)^2=c$, which has two solutions depending on whether $c$ is a nonzero perfect square or not (there are $\frac{3^n+1}{2}$ such $c$'s).  
  
  When $ab\neq 0$, we observe that $x=0$, $x=-a$ are solutions of Equation~\eqref{eq:thm6} if and only if $b=a^{-2}$, respectively, $b=-c a^{-2}$. 
  Note that these can happen simultaneously if and only if $c=-1$.
  
  We now assume $ab\neq 0$, $x\neq 0,-a$.
 Equation~\eqref{eq:thm6} is therefore
 \allowdisplaybreaks
 \begin{align}
 & (x+a)^{-2}-c x^{-2}=b,\text{ that is,}\nonumber\\
 & b (x+a)^2 x^2 -x^2+c(x+a)^2=0,\text{ that is,}\nonumber\\
 & b x^4-b a x^3+b a^2 x^2-x^2+c x^2-ac x+c a^2=0,\nonumber\\
 &x^4-a  x^3+\frac{ba^2+c-1}{b}x^2-\frac{ac}{b}x+\frac{ca^2}{b}=0.\label{eq:thm6-quartic}
  \end{align}
  
   Using $x=y+a$ in~\eqref{eq:thm6-quartic}, we obtain
  \begin{equation}
  \label{eq:main7}
  y^4+\frac{b a^2+c-1}{b}\, y^2+\frac{a(c+1)}{b}\, y
 +\frac{b a^4+(c-1) a^2}{b}=0.
  \end{equation} 
    Now, if $c=-1$ and $b=a^{-2}$, Equation~\eqref{eq:main7} is then
  \begin{equation}
  \label{eq:main7_1}
 y^4-a^2 y^2-a^4=0,
  \end{equation} 
    whose discriminant (of the underlying quadratic) is $2a^4$. Now, we know that the underlying quadratic of Equation~\eqref{eq:main7_1} has two distinct roots if and only if the above discriminant is a nonzero square. It is surely nonzero for $a\neq 0$, and we know that $2=-1$ is a perfect square, say $-1=\imath^2$ in $\F_{3^n}$ if and only if $n$ is even. For the quartic to have four distinct roots in $\F_{3^n}$, obtained from $y^2=  a^2(1\pm \imath)$ (these are the roots of the underlying quadratic), one needs $1\pm \imath$ to be a perfect square, which happens if $n\equiv 0\pmod 4$, and we provide the reason next.  
The minimal polynomial of one of the roots, say $\sqrt{1+\imath}$, is $x^4+x^2-1$ which has the roots $\{\alpha^2 + 2, 2\alpha^3 + 2\alpha^2 + 2\alpha,2\alpha^2 + 1,\alpha^3 + \alpha^2 + \alpha \}$  in $\F_{3^4}$ (we took the primitive polynomial $x^4-x^3-1$ with $\alpha$ as one of the roots), and consequently in all extensions $\F_{3^n}$ of $\F_{3^4}$ with $n\equiv 0\pmod 4$, and no others. Further, if $n\equiv 2\pmod 4$ we do not get more roots from Equation~\eqref{eq:main7_1} under $c=-1,b=a^{-2}$, therefore, in this case we only get the solutions $x=0,-a$.  
    
    If $c=-1$ and $b\neq a^{-2}$, we let $b=a$, obtaining the equation  $a^2 + a^5 + (1 + a^3) y^2 + a y^4=0$. The  discriminant of the underlying quadratic is $1+a^3=(1+a)^3$. Thus, if   $a=d^4-1$ for some random fixed $d\neq0$ such that $a\neq 0,1$, the previous equation has four solutions and so, the first claim is shown.
  
  If $c\neq \pm 1$, putting together the  potential solutions for Equation~\ref{eq:main7} and $y=0$ or $y=-a$, we see that we cannot have more than $5$ solutions. 
  
  We now show that the differential uniformity of $4$ is attained.  
 For $a=1,b=-1$,
 the quartic~\eqref{eq:main7} becomes
 $
y^4-(1+c) y^2-(1+c)y-(1+c)=0,
 $
 which is equivalent to $y^4-(c+1) (y^2+y+1)=0$, and further, $y^4-(c+1)(y-1)^2=0$ (observe that $y\neq 0,1$).
 Thus, taking $c=\alpha^2-1$, for some $\alpha\neq 0,1$, then $\frac{y^2}{y-1}=\pm \alpha$, which is equivalent to
 \[
 y^2\mp\alpha y\pm \alpha=0.
 \]
 These pairs of equations will each have two roots  if and only if the 
 $\eta(\alpha^2\pm 4\alpha)=\eta(\alpha^2\pm \alpha)=1$ ($\eta$ is the quadratic character), that is, if $\alpha^2\pm \alpha$ are nonzero squares in $\F_{3^n}$.
 However, we do have an argument for the existence of such $\alpha$, in general. We use~\cite[Theorem 5.48]{LN96}, which states that if $f(x)=a_2 x^2+a_1 x+a_0$ is a polynomial in a finite field $\F_q$ of odd characteristic, $a_2\neq 0, d=a_1^2-4a_0a_2$, and $\eta$ is the quadratic character on $\F_q$, then the Jacobsthal sum 
 \[
 \sum_{x\in\F_q} \eta(f(x))=\begin{cases}
 -\eta(a_2)&\text{ if }  d\neq 0\\
 (q-1)\eta(a_2)&\text{ if }  d= 0. 
 \end{cases}
 \]
 First, we take $f(x)=x^2-x$, $d=1$, and so, the Jacobsthal sum becomes 
 \begin{align*}
&   \sum_{x\in\F_{3^n}} \eta(x^2-x)=-\eta(1)=-1,\text {so,}\\
&-1=\eta(0)+\eta(1^2-1)+\sum_{x\in\F_{3^n},x\neq 0,1} \eta(x^2-x)=\sum_{x\in\F_{3^n},x\neq 0,1} \eta(x^2-x).
 \end{align*}
 Thus,
 \[
 \sum_{x\in\F_{3^n},x\neq 0,1} \eta(x^2-x)=-1.
 \]
 Similarly, 
  \[
 \sum_{x\in\F_{3^n},x\neq 0,-1} \eta(x^2+x)=-1.
 \]
   Let $N_1=|\{x|x\neq 0,\pm 1, \eta(x^2-x)=1\}$  and $N_2=|\{x|x\neq 0,\pm 1, \eta(x^2+x)=1\}$.
 We compute
 \begin{align*}
  -1&=\sum_{x\in\F_{3^n},x\neq 0,1} \eta(x^2-x)\\
  &=\eta((-1)^2-(-1))+\sum_{x\in\F_{3^n},x\neq 0,\pm 1} \eta(x^2-x)\\
  &=\eta(2)+\sum_{x\in\F_{3^n},x\neq 0,\pm 1} \eta(x^2-x)\\
  &=\eta(-1)+\sum_{x\in\F_{3^n},x\neq 0,\pm 1} \eta(x^2-x)=\eta(-1)+N_1-(3^n-3-N_1)\\
  -1&= \sum_{x\in\F_{3^n},x\neq 0,-1} \eta(x^2+x)\\
  &=\eta(1^2+1)+\sum_{x\in\F_{3^n},x\neq 0,\pm 1} \eta(x^2+x)\\
  &= \eta(-1)+\sum_{x\in\F_{3^n},x\neq 0,\pm 1} \eta(x^2+x)=\eta(-1)+N_2-(3^n-3-N_2).
 \end{align*}
 We therefore get
 \[
 N_1=N_2=\frac{3^n-4-\eta(-1)}{2}.
 \]
 The sets of $x\neq 0,\pm 1$ of cardinality $ \frac{3^n-4-\eta(-1)}{2}$ (equal to $\frac{3^n-3}{2}$ for $n$ odd and  $\frac{3^n-5}{2}$ for $n$ even) such that $x^2-x$, respectively, $x^2+x$ are squares, may still be disjoint. If they are not disjoint, we are done. Suppose now that the sets are disjoint; then,  we need to consider the Jacobsthal sum (again, using~\cite[Theorem 5.48]{LN96})
 \allowdisplaybreaks
 \begin{align*}
 & \sum_{x\in\F_{3^n},x\neq 0,\pm 1} \eta(x^2-x)\eta(x^2+x)
= \sum_{x\in\F_{3^n},x\neq 0,\pm 1} \eta((x^2-x)(x^2+x))\\
  &= \sum_{x\in\F_{3^n},x\neq 0,\pm 1}\eta(x^2(x^2-1))
   = \sum_{x\in\F_{3^n},x\neq 0,\pm 1} \eta(x^2)\eta(x^2-1)\\
  &= \sum_{x\in\F_{3^n},x\neq 0,\pm 1} \eta(x^2-1)
  =\sum_{x\in\F_{3^n}} \eta(x^2-1)-\eta(-1)=-1+\eta(-1),
 \end{align*}
 but that is impossible for $n\geq 3$, if $x^2-x$ and $x^2+x$ are never squares at the same time (given our prior counts, $N_1,N_2$).

 We now take $n$ to be an even integer.  We shall show that there are values of $c$ such that the $c$-differential uniformity is $5$. We take $b=a^{-2}$, so that $x=0$ is a solution of Equation (\ref{eq:thm6}). 
 The idea is to find, under this condition, $A,B$ such that the quartic in $y$ can be written as $(y^2+1)^2+A (y+B)^2=0$, where $B$ is a perfect square.
 We let $a$ such that $a^6+a^2+2=0$. This equation has the solutions $g+1,2g+2$  in $\F_{3^2}$, where $g$ is the primitive root vanishing $X^2-X-1=0$, and consequently, by field embedding, solvable in $\F_{3^n}$ for all even $n$.  Further, we take  $c=da^{-2}$, where $d=\frac{2 +2 a^6}{1 +2 a^2 + 2 a^4}$. Our quartic in $y$ becomes 
 \[
 y^4+ d y^2 + (a^3 + a d) y  +a^2 d=0,
 \]
 which we will write in the form
 \[
 (y^2 + 1)^2 + (1 + d) \left(y -\frac{ a^3+ad}{ 1+d}\right)^2=0.
 \]
 Observe that $\frac{ a^3+ad}{ 1+d}=-\frac{a^4+1}{a^3-a}$, when $d$ has the value we previously chose.
 Thus, assuming that $-(1+d)=-\frac{1 + a^4}{2 + a^2 + a^4}$ is a perfect square, say $\beta^2$ (we will check if that happens later), 
 the four solutions will be given by the equations (we let $\imath$ be the solution to $X^2=-1$ in $\F_{3^2}$, and any other even extension of $\F_{3^2}$) 
 \[\
 y^2+1= \pm\beta  \left(y +\frac{ a^4+1}{ a^3-a}\right),
 \]
 that is,
 \begin{align*}
 y^2\mp \beta y\mp \beta \frac{ a^4+1}{ a^3-a}+1=0.
 \end{align*}
 These last equations will have two solutions each if and only if 
 \[
 \displaystyle \beta^2\pm   \beta \frac{ a^4+1}{ a^3-a}-1
 \]
  is a perfect nonzero square. Using SageMath, we  quickly found values of $a$ satisfying the equation $a^6+a+2=0$ in $\F_{3^4}$  such that $-(1+d)=-\frac{1 + a^4}{2 + a^2 + a^4}$ and the expressions above are perfect squares, as well and, by field embedding, there are values of $a$ for every dimension divisible by $4$ where the last displayed expression is a square, as well.
  The theorem is shown.
   \end{proof}
 
   \begin{remark}
  Our computations in SageMath revealed that, there are other  values of the $c$-differential uniformity for the function in Theorem~\textup{\ref{thm:pn-3}}. In fact, if $c\neq \pm 1$ and $n=2$, then ${_c}\Delta_F=2;$  when $n=3$, we have ${_c}\Delta_F\in\{3,4\};$ for $n=4$, then ${_c}\Delta_F\in\{2,4,5\};$  for $n=5$, we  have ${_c}\Delta_F\in\{4\};$ if $n=6$, then ${_c}\Delta_F\in\{4,5\}$.  
  \end{remark}

  One might wonder if we can use the results of \textup{Bluher~\cite{Bluher04}} to investigate our quartic polynomial. Surely, one can removes the coefficient of $x^2$  in our quartic (to match Bluher's polynomial) using the substitution $b=\frac{1-c}{a^2}$, and the obtained polynomial is
  \[
 x^4-ax^3-\frac{a^3c}{1-c}x+\frac{a^4c}{1-c}
  \]
  but this, unfortunately, does not satisfy the needed condition $EA\neq B$ in the polynomial considered by  Bluher~\textup{\cite{Bluher04}}, $X^{q+1}+EX^q+AX+B$ ($q$ is a power of $p$; $p=3$ in our case), so we cannot use those methods. Even the more recent paper of Kim et al.~\textup{\cite{Kim19}} cannot be used,  because of the same reasons.

  We now look at item $(9)$ in Theorem~\ref{thm:HRS}, and prove a result for this function, and its generalization to $p$ odd (note that the case $p=3$ was proven in ~\cite{CS97,EFRST20}).
For $p$ prime and $n,k$ positive integers, we define the following parameter:
\begin{align*}
\ell& =\max\left\{\frac{1}{2}\gcd(p^k+1,p^n-1),\frac{1}{2}\gcd(p^k+1,p^n+1),\right.\\
&\qquad\qquad \left. \frac{1}{4}(\gcd(p^k+1,p^n-1)+\gcd(p^k+1,p^n+1))\right\}.
\end{align*}
  \begin{thm}
  \label{plusone}
  Let $p$ be an odd prime, and let $F(x)=x^{\frac{p^k+1}{2}}$ on $\F_{p^n}$, $1\leq k<n$, $n\geq 3$. If $c=-1$, then $F$ is PcN if and only if   $\frac{2n}{\gcd(2n,k)}$ is odd. 
  Otherwise, $F(x)$ will have the $(-1)$-differential uniformity  $_{-1}\Delta_F=\frac{p^{\gcd(k,n)}+1}{2}$.
  \end{thm}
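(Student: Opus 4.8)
Fix $c=-1$ and write $d=\frac{p^{k}+1}{2}$, so that $2d=p^{k}+1$. For $a,b\in\F_{p^{n}}$ we must count the solutions $x$ of $(x+a)^{d}+x^{d}=b$. The plan rests on the identity
\[
D_{d}(4z+2,1)=2\big((z+1)^{d}+z^{d}\big)\qquad(z\in\F_{p^{n}}),
\]
which I would prove by putting $w=(\sqrt{z+1}+\sqrt{z})^{2}$ (working in $\F_{p^{2n}}$): then $w^{-1}=(\sqrt{z+1}-\sqrt{z})^{2}$, $w+w^{-1}=4z+2$, and $w^{\pm d}=(\sqrt{z+1}\pm\sqrt{z})^{p^{k}+1}$, so the defining property of $D_{d}$ together with a Frobenius expansion --- in which the cross terms cancel and $(\sqrt{z+1})^{p^{k}+1}=(z+1)^{d}$ --- yields the claim.

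Granting the identity, the case $a=0$ is trivial: $2x^{d}=b$ has at most $m:=\gcd(d,p^{n}-1)$ solutions. For $a\ne0$, substituting $z=x/a$ turns the equation into $(z+1)^{d}+z^{d}=b/a^{d}$, which by the identity and the \emph{affine} --- hence bijective --- change of variable $Y=4z+2$ is equivalent to $D_{d}(Y,1)=2b/a^{d}$. Thus ${}_{-1}\Delta_{F}(a,b)=\big|D_{d}^{-1}(2b/a^{d})\big|$, and Theorem~9 of~\cite{CGM88} quoted above evaluates this as $0$ or one of $m,\ \bar\ell,\ \tfrac m2,\ \tfrac{\bar\ell}2,\ \tfrac{m+\bar\ell}2$, where $\bar\ell:=\gcd(d,p^{n}+1)$. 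Since $2b/a^{d}$ ranges over all of $\F_{p^{n}}$ as $b$ varies (with $a$ fixed), and since the generic fibres --- those over $x_{0}$ with $\eta(x_{0}^{2}-4)=\pm1$ and $D_{d}(x_{0})\ne\pm2$, which exist because $n>k$ --- realise $m$ and $\bar\ell$ respectively, one obtains
\[
{}_{-1}\Delta_{F}=\max\Big\{m,\ \bar\ell,\ \tfrac{m+\bar\ell}{2}\Big\}=\max\{m,\bar\ell\},
\]
the $a=0$ contribution $m$ being no larger.

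It remains to compute $\max\{m,\bar\ell\}$, and this is the purely arithmetic part. Applying Lemma~\ref{lem:gcd} to $p^{n}-1$ and (through $p^{2n}-1$) to $p^{n}+1$, together with the companion fact that $\gcd(p^{k}+1,p^{n}+1)=p^{\gcd(k,n)}+1$ when $k/\gcd(k,n)$ and $n/\gcd(k,n)$ are both odd and $=2$ otherwise, and keeping track of the $2$-adic valuation of $p^{k}+1$ (which is $1$ for $k$ even), one finds: if $\tfrac{2n}{\gcd(2n,k)}$ is odd --- equivalently $v_{2}(k)>v_{2}(n)$, equivalently $n/\gcd(n,k)$ odd with $k/\gcd(n,k)$ even --- then $\gcd(p^{k}+1,p^{n}-1)=\gcd(p^{k}+1,p^{n}+1)=2$ and $(p^{k}+1)/2$ is odd, forcing $m=\bar\ell=1$, so $F$ is PcN (indeed $\gcd(d,p^{2n}-1)=1$ there, making $D_{d}(\cdot,1)$ a permutation of $\F_{p^{n}}$); otherwise exactly one of those two gcd's equals $p^{\gcd(k,n)}+1$, and a short valuation check gives that the corresponding one of $m,\bar\ell$ equals $\frac{p^{\gcd(k,n)}+1}{2}$ while the other is smaller, so ${}_{-1}\Delta_{F}=\frac{p^{\gcd(k,n)}+1}{2}$. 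In particular $\max\{m,\bar\ell\}$ is precisely the quantity $\ell$ introduced just before the statement.

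The crux, once the identity is spotted, is exactly this $2$-adic bookkeeping: $\gcd(d,p^{n}\mp1)$ need not equal $\tfrac12\gcd(p^{k}+1,p^{n}\mp1)$, so one cannot quote Lemma~\ref{lem:gcd} blindly --- one must chase the valuations of $p^{k}\pm1$ and $p^{n}\pm1$ to verify that $\max\{m,\bar\ell\}$ is $\ell$ in every case and collapses to $1$ exactly when $\tfrac{2n}{\gcd(2n,k)}$ is odd. A secondary point is the appeal to~\cite{CGM88} to ensure the maximal fibre size is genuinely attained (for which $n>k$ suffices, so that some $x_{0}$ misses the finitely many zeros of $D_{d}(x)\mp2$), which upgrades the bound to an equality.
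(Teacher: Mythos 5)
Your argument is correct and is essentially the paper's own: both proofs reduce the count of solutions of $(x+a)^{d}+x^{d}=b$, $d=\frac{p^k+1}{2}$, to the fibre sizes of the Dickson polynomial $D_{d}(\cdot,1)$ on $\F_{p^n}$ (your identity $D_{d}(4z+2,1)=2\bigl((z+1)^{d}+z^{d}\bigr)$ is exactly the paper's substitution chain $x\mapsto -4z/a$, shift by $2$, $z=y+y^{-1}$, packaged as a single formula), and both then invoke Theorem~9 of~\cite{CGM88} together with the gcd computations. Two of the points you flag are genuine improvements in rigor over the paper's writeup: you justify that the fibre sizes $m$ and $\bar\ell$ are actually attained (the paper asserts ``we can obtain all cases'' without argument), and you correctly observe that $\gcd\bigl(\tfrac{p^k+1}{2},p^n\mp1\bigr)$ need not equal $\tfrac12\gcd(p^k+1,p^n\mp1)$ --- e.g.\ for $p=3$, $k=1$, $n=3$ one has $\gcd(2,26)=2$ while $\tfrac12\gcd(4,26)=1$ --- so the paper's intermediate formulas for $m,\bar\ell$ are not literally valid in all cases, even though its final maximum agrees with yours. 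If you write this up, do carry out the ``short valuation check'' in full (the three cases $v_2(k)>v_2(n)$, $=$, $<$), since that is where the actual content of the second assertion lives.
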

  \begin{proof}
  We take the approach of~\cite{CS97,EFRST20}, where it was shown that $x^{\frac{3^k+1}{2}}$ is PcN on $\F_{3^n}$, for $c=\pm 1$. Similarly, our function is $PcN$ if and only if the $c$-derivative $(x+a)^{\frac{p^k+1}{2}}-c x^{\frac{p^k+1}{2}}$ is a permutation polynomial. With a change of variable $z=\frac{-a x}{4}$, we see that this is equivalent to  $\left(z-4\right)^{\frac{p^k+1}{2}}-c z^{\frac{p^k+1}{2}}$ being a permutation polynomial. Shifting by 2, we can see that this happens if and only if $h_c(z)=\left(z-2\right)^{\frac{p^k+1}{2}}-c \left(z+2\right)^{\frac{p^k+1}{2}}$ is a permutation polynomial.  
  We can always write $z=y+y^{-1}$, for some $y\in\F_{p^{2n}}$. Our condition (for general $c\neq 1$) becomes
  \allowdisplaybreaks
 \begin{align*}
 h_c(z)&=\left(y+y^{-1}-2\right)^{\frac{p^k+1}{2}}-c\left(y+y^{-1}+2\right)^{\frac{p^k+1}{2}}\\
 &= \frac{\left(y^2-2y+1\right)^{\frac{p^k+1}{2}}-c\left(y^2+2y+1\right)^{\frac{p^k+1}{2}}}{y^{\frac{p^k+1}{2}}}\\
&= \frac{\left(y-1\right)^{p^k+1}-c\left(y+1\right)^{p^k+1}}{y^{\frac{p^k+1}{2}}}\\
 &= \frac{(1-c)y^{p^k+1}-(1+c)y^{p^k}-(1+c) y+(1-c)}{y^{\frac{p^k+1}{2}}}\\
 &=(1-c) y^{\frac{p^k+1}{2}}-(1+c) y^{\frac{p^k-1}{2}}-(1+c) y^{\frac{-p^k+1}{2}}+(1-c) y^{\frac{-p^k-1}{2}}\\
 &= (1-c) D_{\frac{p^k+1}{2}}\left(z\right)-(1+c) D_{\frac{p^k-1}{2}}\left(z\right)
 \end{align*}
 is a permutation polynomial, where $D_m(x)(=D_m(x,1))$ is the Dickson polynomial of the first kind, in our notation. 
 
  If $c=-1$, we obtain that
 $D_{\frac{p^k+1}{2}}(x)$ must be a permutation polynomial, and this is equivalent (by~\cite{No68}) to   $\displaystyle \gcd\left( \frac{p^k+1}{2}, p^{2n}-1 \right)=1$. This last identity can be further simplified to $\displaystyle \gcd\left( p^k+1 , p^{2n}-1 \right)=2$. By Lemma~\ref{lem:gcd} a necessary and sufficient condition for that to happen is for   $\displaystyle \frac{2n}{\gcd(2n,k)}$ to be odd (this holds if and only if $k$ is even and, if $n=2^ta,\,2\!\not| a$, with $t\geq0$, and $k=2^\ell b,\,2\!\not| b$, then $\ell\geq t+1$).
 
In general, if $\ell=\max\left\{\left|D^{-1}_{\frac{p^k+1}{2}}(b)\right|: b\in \F_{p^n}\right\}$, then $_{-1}\Delta_F=\ell$. Here we will use Theorem~9 of~\cite{CGM88} stated above. Here $d=\frac{p^k+1}{2}$, so $m=\gcd\left(\frac{p^k+1}{2},p^n-1\right)=\frac{1}{2}\gcd(p^k+1,p^n-1),\,\bar{\ell}=\gcd\left(\frac{p^k+1}{2},p^n+1\right)=\frac{1}{2}\gcd(p^k+1,p^n+1)$. Then, in general, we can obtain all cases, 
and so 
\allowdisplaybreaks
\begin{align*}
\ell& =\max\left\{\frac{1}{2}\gcd(p^k+1,p^n-1),\frac{1}{2}\gcd(p^k+1,p^n+1),\right.\\
&\qquad\qquad \left. \frac{1}{4}(\gcd(p^k+1,p^n-1)+\gcd(p^k+1,p^n+1))\right\}\\
&=\max\left\{\frac{1}{2}\gcd(p^k+1,p^n-1),\frac{1}{2}\gcd(p^k+1,p^n+1)\right\}.
\end{align*}
Note that, by Lemma \ref{lem:gcd}, $\gcd(p^k+1,p^n-1)=2$ if $\frac{n}{\gcd(n,k)}$  is odd and $\gcd(p^{k}+1,p^n-1)=p^{\gcd(k,n)}+1$ if $\frac{n}{\gcd(n,k)}$ is even, while 
$$\begin{array}{ll}&\gcd(p^k+1,p^n+1)=\gcd(p^k+1,p^n+1-(p^k+1))\\
&=\gcd(p^k+1,p^{k}(p^{n-k}-1))=\gcd(p^k+1,p^{n-k}-1)\\
&=\left\{\begin{array}{l}
2 \mbox{ if }\frac{n-k}{\gcd(n-k,k)}=\frac{n-k}{\gcd(n,k)} \mbox{  is odd}\\
p^{\gcd(k,n)}+1\mbox{ if }\frac{n-k}{\gcd(n-k,k)}=\frac{n-k}{\gcd(n,k)} \mbox{ is even}
\end{array}\right.
 \end{array}$$ so, if $n=2^ta,\,2\!\not| a$, with $t\geq0$, and $k=2^\ell b,\,2\!\not| b$, then, if $\ell\geq t+1$, then $m=1=\bar{\ell}$, and so $\ell=1$ and the function is PcN, while, if $\ell=t$, then $m=2,\,\bar{\ell}=\frac{p^{\gcd(k,n)}+1}{2}=\ell$, while, if $\ell< t$, then $m=\frac{p^{\gcd(k,n)}+1}{2},\,\bar{\ell}=2$, and so $\ell=\frac{p^{\gcd(k,n)}+1}{2}$. Summarizing, if $\ell\leq t$, then  $\ell=\frac{p^{\gcd(k,n)}+1}{2}$.

The proof of the theorem is complete.
    \end{proof}

\begin{thm}    
 Let $G(x)=x^{\frac{3^n+3}{2}}$ on $\F_{3^n}$, $n\geq 2$. If $c=-1$, then $G$ is PcN if $n$ is odd, APcN if $n$ is even. If $c=1$, then its differential uniformity is ${_1}\Delta_G=1$, if $n$ is even and ${_1}\Delta_G=4$ if $n$ is odd.
\end{thm}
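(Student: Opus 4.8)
The plan is to deduce the whole statement from the power function $H(x)=x^{\frac{3^{n-1}+1}{2}}$ on $\F_{3^n}$, whose $c$-differential behaviour for $c=\pm1$ is already within reach. The key identity is
\[
G(x)=x^{\frac{3^n+3}{2}}=x^{3\cdot\frac{3^{n-1}+1}{2}}=H(x)^{3},
\]
so $G=\phi\circ H$ with $\phi(t)=t^{3}$ the Frobenius automorphism of $\F_{3^n}$. Since $\phi$ is additive and fixes $\F_3$ pointwise, for $c\in\{1,-1\}\subset\F_3$ we get, using $c^3=c$ and $(u-v)^3=u^3-v^3$ in characteristic $3$,
\[
{}_cD_aG(x)=H(x+a)^{3}-c\,H(x)^{3}=\bigl(H(x+a)-c\,H(x)\bigr)^{3}=\phi\bigl({}_cD_aH(x)\bigr).
\]
As $\phi$ is a bijection of $\F_{3^n}$, one has ${}_c\Delta_G(a,b)={}_c\Delta_H(a,\phi^{-1}(b))$ for all $a,b$, hence ${}_c\Delta_G={}_c\Delta_H$ for $c=\pm1$. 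So it suffices to compute ${}_{-1}\Delta_H$ and the ordinary differential uniformity ${}_1\Delta_H$.

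For $c=-1$ I would apply Theorem~\ref{plusone} with $p=3$, $k=n-1$ (first for $n\ge3$). Then $\gcd(k,n)=1$ and $\gcd(2n,k)=\gcd(2n,n-1)=\gcd(2,n-1)$, so $\tfrac{2n}{\gcd(2n,k)}$ equals $n$ (odd) when $n$ is odd and $2n$ (even) when $n$ is even. Theorem~\ref{plusone} then gives that $H$ is PcN for $n$ odd and ${}_{-1}\Delta_H=\tfrac{3^{\gcd(n-1,n)}+1}{2}=2$ (APcN) for $n$ even. For $n=2$, $H(x)=x^2$ on $\F_9$ is APcN for all $c\ne1$ by Theorem~\ref{c-diff}$(i)$, so ${}_{-1}\Delta_H=2$ there as well. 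Transporting back through $\phi$ gives the asserted behaviour of $G$ for $c=-1$.

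For $c=1$ (so $a\ne0$) I would reuse the change of variables from the proof of Theorem~\ref{plusone}: an affine bijection $x\leftrightarrow z$ of $\F_{3^n}$ carries $H(x+a)-H(x)=b$ to $h_1(z)=b'$, and specializing the identity $h_c(z)=(1-c)D_{\frac{p^k+1}{2}}(z)-(1+c)D_{\frac{p^k-1}{2}}(z)$ to $c=1$, $p=3$, $k=n-1$ yields, since $-2=1$ in $\F_3$,
\[
h_1(z)=-2\,D_{\frac{3^{n-1}-1}{2}}(z)=D_d(z),\qquad d:=\tfrac{3^{n-1}-1}{2}.
\]
As $b'$ runs over $\F_{3^n}$ this gives ${}_1\Delta_H=\max_{x_0\in\F_{3^n}}\bigl|D_d^{-1}(D_d(x_0))\bigr|$, to be evaluated by Theorem~9 of~\cite{CGM88}. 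Its ingredients are $m=\gcd(d,3^n-1)$, $\bar\ell=\gcd(d,3^n+1)$ and $r$ with $2^{r}\|(3^{2n}-1)$. From $\gcd(3^{n-1}-1,3^n-1)=3^{\gcd(n-1,n)}-1=2$ and $\gcd(3^{n-1}-1,3^n+1)=\gcd(3^{n-1}-1,4)$, together with the $2$-adic valuation of $3^{n-1}-1$, one finds: for $n$ even, $d$ is odd and $m=\bar\ell=1$, so $D_d$ is a permutation of $\F_{3^n}$ and ${}_1\Delta_H=1$; for $n$ odd, $m=2$ and $\bar\ell=4$, so the possible preimage sizes in~\cite{CGM88} lie in $\{m,\bar\ell,m/2,\bar\ell/2,\tfrac{m+\bar\ell}{2}\}=\{2,4,1,2,3\}$, giving ${}_1\Delta_H\le4$, and the value $4$ is attained since the $x_0=y+y^{-1}$ with $y^{3^n+1}=1$ (equivalently $\eta(x_0^2-4)=-1$) number $\tfrac{3^n-1}{2}$, while only $O(1)$ of them can have $D_d(x_0)=\pm2$, so some such $x_0$ lands in the generic branch of size $\bar\ell=4$ for every odd $n\ge3$. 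Pulling back through $\phi$ gives ${}_1\Delta_G=4$ for $n$ odd and ${}_1\Delta_G=1$ for $n$ even; for $n=2$, where $H=x^2$ is planar on $\F_9$, one gets ${}_1\Delta_G=1$ directly.

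The main obstacle I anticipate is the last step: doing the $\gcd$ and $2$-adic bookkeeping for $\gcd\!\bigl(\tfrac{3^{n-1}-1}{2},3^n\pm1\bigr)$ and for $r$ precisely enough to pin down which branch of the formula in~\cite{CGM88} is realized, and verifying that the generic value $\bar\ell=4$ really occurs for every odd $n\ge3$ (a short counting argument on the elements of order dividing $3^n+1$ — the admissible set grows like $3^n$ while the exceptional set is $O(1)$ — plus, if desired, a direct check of the smallest cases). One must also make sure the affine substitution underlying the reduction to $D_d$ is genuinely a bijection of $\F_{3^n}$ for each $a\ne0$, so that it transports the full solution-count distribution of the $c$-derivative, and not merely the permutation property that sufficed for $c=-1$.
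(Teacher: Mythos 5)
Your proposal is correct, but it reaches the result by a genuinely different route than the paper. Your key step is the factorization $G=\phi\circ H$ with $\phi(t)=t^3$ the Frobenius and $H(x)=x^{\frac{3^{n-1}+1}{2}}$, together with the observation that ${}_cD_aG=\phi\circ{}_cD_aH$ for $c\in\F_3$, so that ${}_c\Delta_G={}_c\Delta_H$; this lets you dispatch the $c=-1$ case entirely by citing Theorem~\ref{plusone} with $k=n-1$ (and Theorem~\ref{c-diff}$(i)$ for $n=2$), since $\frac{2n}{\gcd(2n,n-1)}$ is odd exactly when $n$ is odd and the ``otherwise'' value is $\frac{3^{\gcd(n-1,n)}+1}{2}=2$. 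The paper instead works directly with $G$, writes $h_c(z)=(1-c)D_{\frac{3^n+3}{2}}(z)-(1+c)D_{\frac{3^n-3}{2}}(z)$, and redoes the gcd computation $\gcd\bigl(\tfrac{3^n+3}{2},3^{2n}-1\bigr)=\tfrac12\gcd(3^{n-1}+1,3^n-1)$ plus a fresh application of Theorem~9 of \cite{CGM88} for $n$ even; your reduction buys you the reuse of an already-proved theorem at the cost of the (easy) Frobenius conjugation argument. For $c=1$ the two proofs converge on the same Dickson value-set machinery, but on different indices: the paper analyzes $D_{\frac{3^n-3}{2}}$ with $\bar\ell\in\{4,8\}$ depending on $n\bmod 4$ (and also simply cites \cite[Theorem~3]{HS97}), while your $D_{\frac{3^{n-1}-1}{2}}$ gives the cleaner uniform values $m=2$, $\bar\ell=4$ for all odd $n$ (the two indices differ by the factor $3$, which is harmless since $D_{3m}=D_3\circ D_m$ and $D_3$ permutes $\F_{3^n}$). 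Two points you flag as concerns are indeed fine but should be stated precisely: the substitution $x\mapsto z$ underlying the reduction to $h_c$ is the composition of $x\mapsto -ax/4$ with a translation, hence an affine bijection of $\F_{3^n}$ for $a\neq 0$, so it transports the full fiber-size distribution and not just the permutation property; and your ``$O(1)$'' attainment argument should be replaced by the explicit count that $|D_d^{-1}(2)|+|D_d^{-1}(-2)|\leq 2\max\{m,\bar\ell,\tfrac{m+\bar\ell}{2}\}=8$ while $|\{x_0:\eta(x_0^2-4)=-1\}|=\tfrac{3^n-1}{2}\geq 13$ for odd $n\geq 3$, so the branch of size $\bar\ell=4$ is always realized.
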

\begin{proof} 
Now, $G$ is PcN on $\F_{3^n}$, for  $c\neq 1$, if and only if (by similar arguments as before) the following is a permutation polynomial (where $z=y+y^{-1}, y\in \F_{3^{2n}}$), 
\allowdisplaybreaks
 \begin{align*}
 h_c(z)&=\left(y+y^{-1}-2\right)^{\frac{3^n+3}{2}}-c\left(y+y^{-1}+2\right)^{\frac{3^n+3}{2}}\\
 &= \frac{\left(y-1\right)^{3^n+3}-c\left(y+1\right)^{3^n+3}}{y^{\frac{3^n+3}{2}}}\\ 
 &= \frac{y^{3^n+3}-y^{3^n}-y^3+1-c\left(y^{3^n+3}+y^{3^n}+y^3+1\right)}{y^{\frac{3^n+3}{2}}}\\  
&= \frac{(1-c) \left(y^{3^n+3}+1\right)-(1+c)(y^{3^n}+y^3)}{y^{\frac{3^n+3}{2}}}\\
&=(1-c) D_{\frac{3^n+3}{2}}(z)-(1+c) D_{\frac{3^n-3}{2}}(z).
 \end{align*}

 If $c=-1$, then $h_c(z)= 2 \,D_{\frac{3^n+3}{2}}(z)$, which is known to be a permutation polynomial if and only if $\gcd\left(\frac{3^n+3}{2}, 3^{2n}-1 \right)=1$, which is easy to show that it always happens for $n$ odd (for $n$ even,  $\gcd\left(\frac{3^n+3}{2}, 3^{2n}-1 \right)=2$). This can be seen from the following argument:
\begin{align*}
\gcd\left(\frac{3^n+3}{2}, 3^{2n}-1 \right)&=\frac{1}{2}\gcd(3^n+3,3^{2n}-1)\\
& =\frac{1}{2}\gcd(3^n+3,3^{2n}-1-(3^n+3)(3^n-1))\\
& =\frac{1}{2}\gcd(3^n+3,2-2\cdot3^n)\\
&=\frac{1}{2}\gcd(3(3^{n-1}+1),2(1-3^n))\\
& =\frac{1}{2}\gcd(3^{n-1}+1,3^n-1).
\end{align*}
 Now, by Lemma \ref{lem:gcd},
 $\gcd(3^{n-1}+1,3^n-1)=2$ if $\frac{n}{\gcd(n-1,n)}=n$ is odd, and $\gcd(3^{n-1}+1,3^n-1)=3^{\gcd(n-1,n)}+1=4$ if $\frac{n}{\gcd(n-1,n)}=n$ is even, which implies our claim.
  
Thus, for $n$ odd and $c=-1$, the function is PcN. Let $n$ be even:
 if $\ell=\max\left\{\left|D^{-1}_{\frac{{3^n+3}}{2}}(b)\right|: b\in \F_{p^n}\right\}$, then $_{-1}\Delta_F=\ell$. By Theorem~9 of~\cite{CGM88}, supposing $2^r||(3^{2n}-1)$, then, for $x_0\in\F_{3^n}$,
 \begin{equation*}
\label{eq:Dickson}
|D_d^{-1}(D_d(x_0))|=
\begin{cases}
m, &\mbox{ if }\eta(x_0^2-4)=1,\,D_d(x_0)\neq\pm2\\
\bar{\ell}, &\mbox{ if }\eta(x_0^2-4)=-1,\,D_d(x_0)\neq\pm2\\
\frac{m}{2}, &\mbox{ if }\eta(x_0^2-4)=1,\,2^t||d,\,1\leq t\leq r-2,D_d(x_0)=-2\\
\frac{\bar{\ell}}{2}, &\mbox{ if }\eta(x_0^2-4)=-1,\,2^t||d,\,1\leq t\leq r-2,D_d(x_0)=-2\\
\frac{m+\bar{\ell}}{2}, &\mbox{ otherwise,}
\end{cases}
\end{equation*}
where $m=\gcd(\frac{3^n+3}{2},3^n-1)=\frac{1}{2}\gcd(3^n+3,3^n-1)=\frac{1}{2}\gcd(3(3^{n-1}+1),3^n-1)=2,\,\bar{\ell}=\gcd(\frac{3^n+3}{2},3^n+1)=2$, all under $n$ being even. Thus, $\ell=2$.

When $c=1$, we can apply~\cite[Theorem 3]{HS97}, since $\frac{3^n+3}{2}=\frac{3^n-1}{2}+2$ and so, ${_1}\Delta_G=1$, if $n$ is even and ${_1}\Delta_G=4$ if $n$ is odd.
We can also show the result easily by a similar argument as above. We need to look at $D_{\frac{3^n-3}{2}}(z)$ and its value sets.
First, observe that (it is easy to show by algebraic manipulations, as in our previous discussion) that 
\begin{equation}
\label{eq:gcd2}
\gcd\left(\frac{3^n-3}{2}, 3^{2n}-1 \right)
=\begin{cases}
1 &\text{ if } n\equiv 0\pmod 2\\
4 &\text{ if } n\equiv 3\pmod 4\\
8 &\text{ if } n\equiv 1\pmod 4.
\end{cases}
\end{equation}

 As in the case $c=-1$, we shall be using Theorem~9 of~\cite{CGM88}. In this case,
  $m=\gcd(\frac{3^n-3}{2}, 3^n-1)=\gcd(\frac{3(3^{n-1}-1)}{2}, 3^n-1)=\gcd(\frac{3^{n-1}-1}{2},3^n-1)=1,2$, for $n$ even, respectively, odd; $\bar{\ell}=\gcd(\frac{3^n-3}{2}, 3^n+1)=1,4,8$, as in Equation~\eqref{eq:gcd2}. Thus, if $n\equiv 0\pmod 2$, we get the maximum cardinality of the preimage~\eqref{eq:Dickson} of our Dickson polynomial to be 1; and if $n$ is odd, the maximum cardinality of the preimage~\eqref{eq:Dickson} of our Dickson polynomial is  $\frac{8}{2}=4$, matching therefore the result from~\cite[Theorem 3]{HS97}.
The theorem is shown.
\end{proof}

\section{Concluding remarks}

In this paper we investigated the $c$-differential uniformity of the Gold function and some of the functions from the Helleseth-Rong-Sandberg table and related ones. 
For example, we showed that $x^{\frac{p^k+1}{2}}$ is PcN with respect to $c=-1$ if and only if $\frac{2n}{\gcd(2n,k)}$ is odd, as well $x^{\frac{3^n+3}{2}}$ is PcN, respectively, APcN, with respect to $c=-1$, if $n$ is odd, respectively, even.
Surely, it would be interesting to continue with some of the other entries in the Helleseth-Rong-Sandberg table, or the results from Dobbertin et al.~\cite{Dob03}, or even to find newer PN or APN classes of functions, through the prism of the newly defined $c$-differentials concept we introduced in~\cite{EFRST20}. 

\vskip.3cm
\noindent
{\bf Acknowledgments}. The authors are grateful to the reviewers for  pointing out some initial errors in the proof of Theorem~\ref{thm:pn-3}, and for the very helpful comments and suggestions which have highly improved the manuscript.

 \end{document}